\newcommand{\Nat}{I\!\!N}
\newcommand{\Rat}{I\!\!R}
\newcommand{\Rank}{\mbox{$\mathit{rank}$}}
\newcommand{\Select}{\mbox{$\mathit{select}$}}
\title{Space-Efficient Plane-Sweep Algorithms}
\author[1]{Amr Elmasry}
\author[2]{Frank Kammer}
\affil[1]{Department of Computer Engineering and Systems\\
Alexandria University, Alexandria 21544, Egypt\\
  \texttt{elmasry@alexu.edu.eg}}
\affil[2]{Institut f\"ur Informatik, Universit\"at Augsburg\\
  86135 Augsburg, Germany\\
  \texttt{kammer@informatik.uni-augsburg.de}}
\authorrunning{A.\ Elmasry and F.\ Kammer} 
\subjclass{E.1 Data Structures F.2.2 Nonnumerical Algorithms and Problems, I.1.2 Analysis of Algorithms, I.3.5 Geometric Algorithms}
\keywords{closest pair, line-segments intersection, Klee's measure}
\begin{document}

  \maketitle{}

\begin{abstract}
We introduce space-efficient plane-sweep algorithms for basic planar geometric problems.
It is assumed that the input is in a read-only array of $n$ items and that the available workspace is $\Theta(s)$ bits, where $\lg n \leq s \leq n \cdot \lg n$.
Three techniques that can be used as general tools in different space-efficient algorithms are introduced and employed within our algorithms.
In particular, we give an almost-optimal algorithm for finding the closest pair among a set of $n$ points that runs in $O(n^2/s + n \cdot \lg s)$ time. 
We also give a simple algorithm to enumerate the intersections of $n$ line segments that runs in $O((n^2/s^{2/3}) \cdot \lg s + k)$ time, where $k$ is the number of intersections. 
The counting version can be solved in $O((n^2/s^{2/3}) \cdot \lg s)$~time.
When the segments are axis-parallel, we give an $O((n^2/s) \cdot \lg^{4/3} s + n^{4/3} \cdot \lg^{1/3} n)$-time algorithm for counting the intersections, and an algorithm for enumerating the intersections that runs in $O((n^2/s) \cdot \lg s \cdot \lg \lg s + n \cdot \lg s + k)$ time, where $k$ is the number of intersections. We finally present an algorithm that runs in $O((n^2/s + n \cdot \lg s) \cdot \sqrt{(n/s) \cdot \lg n})$ time to calculate Klee's measure of axis-parallel rectangles.
\end{abstract}

\section{Introduction}
Because of the rapid growth of the input data sizes in current applications, algorithms that are designed 
to efficiently utilize space are becoming even more important than before.
One other reason for the need for space-efficient algorithms is the
limitation in the memory sizes that can be deployed to modern embedded systems. 
Therefore, many algorithms have been developed with the objective of optimizing the time-space product. 

Several models of computation have been considered for the case when writing in the input area is restricted. 
The objective of a space-efficient algorithm is to optimize the amount of extra space needed to perform its task.
In the \emph{multi-pass streaming model} \cite{MunP80} the input is assumed to be held in a read-only sequentially-accessible media, and the goal would be to optimize the number of passes an algorithm makes over the input.
In the \emph{read-only random-access model} \cite{Fre87}---the model that we
consider in this paper---the input is assumed to be stored on a read-only
randomly-accessible media, and arithmetic operations on operands that fit in
a computer word are assumed to take constant time each.
Here, optimizing the number of arithmetic operations would be the target.
Another model, introduced in \cite{BroIKMMT04}, allows the input to be permuted but not destroyed.
For a variant of the latter model, called the \emph{restore model} \cite{ChaMR14}, 
the input array is allowed to be modified while answering a query but it has to be restored to its original state afterwards.

Throughout the paper, it is assumed that $n$ is the number of items of the input each stored in a constant number of words, 
and that the available workspace is $\Theta(s)$ bits, where $\lg n \leq s \leq n \cdot \lg n$.
Since a single cursor, which is necessary to iterate over the input,
already needs $\lg n$ bits, we can not hope to solve any of the problems with less workspace.
In addition, and as usual, it is assumed that operations on the input coordinates can be performed in constant time each.
We emphasize that this assumption is not essential for our algorithms to work, but only scales with their running times.

Next, we survey some of the major results known for the read-only random-access model.
Pagter and Rauhe \cite{PagR98} gave an asymptotically-optimal algorithm for sorting $n$ elements, 
which runs in $O(n^2/s + n \cdot \lg s)$ time.
A simplified variation of this sorting algorithm is given in \cite{AsaEK13}.
Beame \cite{Bea91} established a matching $\Omega(n^2)$ lower bound for the time-space product for sorting in the stronger branching-program model.
Several papers \cite{ElmJKS14,Fre87,MunR96,RamR99} considered the selection problem in the read-only random-access model. Elmasry et al.~\cite{ElmKH14} introduced space-efficient algorithms for basic graph problems. 
Concerning geometric problems, Chan~\cite{Cha02} presented an algorithm for the
closest-pair problem with integer coordinates in the word RAM model,
and his algorithm can be made to work in the read-only model.
Darwish and Elmasry \cite{DarE14} gave an optimal planar convex-hull construction algorithm that runs in $O(n^2/s + n \cdot \lg s)$ time. Konagaya and Asano \cite{KonA13} gave an algorithm for reporting line-segments intersections that runs in $O((n^2/\sqrt{s}) \cdot \sqrt{\lg n} + k)$ time, where $k$ is the number of intersections.
Chan and Chen~\cite{ChaC07} have noted that the standard
Clarkson-Shor approach leads to randomized multi-pass streaming
algorithms for 3-D convex hulls and 2-D Voronoi diagrams as long as $s \ge n^{1/c}$ bits 
of working space are available, where $c$ is a fixed constant.
Recently, Korman et al.~\cite{KorMRRSS15} gave space-efficient algorithms 
for triangulations and for constructing Voronoi diagrams whenever $s =
\Omega(\lg n \cdot \lg \lg n)$ bits of working space are available.  
Asano et al.~\cite{AsaMRW11} considered space-efficient plane-sweep algorithms for Delaunay
triangulation and Voronoi diagram. However, they only considered the case where $s=\Theta(\log n)$ bits, 
and both algorithms run in $O(n^2)$ time for this case.
Other papers that deal with space-efficient geometric algorithms include \cite{AsaBBKMRS14,BarKLSS13}.

As a building block for our algorithms we use the {\em adjustable navigation pile} \cite{AsaEK13}; an efficient priority-queue-like data structure that uses $O(s)$ bits, where $\lg n \leq s \leq n \cdot \lg n$, in the read-only random-access model of computation. Given a read-only input array of $n$ elements and a specified value, an adjustable navigation pile can be initialized in $O(n)$ time. Subsequently, the elements that are larger than the given value can be streamed in sorted order in $O(n/s + \lg s)$ time per element. 
Thus, it is possible to stream the next $k$ elements starting with a specified value in sorted order in $O((n/s + \lg s) \cdot k + n)$ time, and all the elements of the array can be streamed in sorted order in $O(n^2/s + n \cdot \lg s)$ time.

Another ingredient that we use in some of our algorithms is a rank-select data structure \cite{Cla96,Mun96}. 
A rank-select data structure can be built on a vector of $n$ bits using $O(n)$ time and $o(n)$ extra bits,
and supports in $O(1)$ time the queries $\Rank(i)$, which returns the number of 1-bits in the first $i$ positions 
of the bit vector, and $\Select(j)$, which returns the index of the $j$-th 1-bit in the bit vector.
In accordance, one can sequentially scan the entries of the bit vector that have 1-bits in $O(1)$~time
per entry.

In this paper we give space-efficient plane-sweep algorithms for solving planar geometric problems.
In contrast to Asano et al.~\cite{AsaMRW11}, all our algorithms allow a
trade-off between time and space and work for all values of $s$ where $\lg n \leq s \leq n \cdot \lg n$.
In Section \ref{stretching} we introduce a general technique 
that we call the {\em stretching technique} relying on splitting the input array, 
and later employ it in our algorithms. 
In Section \ref{intersect} we give a simple algorithm for enumerating intersections among $n$ line segments that runs in $O((n^2/s^{2/3}) \cdot \lg s + k)$ time, where $k$ is the number of intersections. Our algorithm is asymptotically faster than that of Konagaya and Asano for all values of $s$. We point out that the same approach can be used to count the number of intersections in $O((n^2/s^{2/3}) \cdot \lg s)$ time. In Section \ref{closest} we give an algorithm for finding the closest pair among $n$ points whose running time is $O(n^2/s + n \cdot \lg s)$. 
To obtain this result, we combine new ideas with the classical plane-sweep
and divide-and-conquer approaches for solving the closest-pair problem.
A lower bound of $\Omega(n^{2-\epsilon})$ was shown by Yao \cite{Yao94} for the time-space product of the element-distinctness problem, where $\epsilon$ is an arbitrarily small positive constant. This lower bound applies for the closest-pair problem, indicating that our algorithm is close to optimal. In Section \ref{counting-axis-parallel} we give an algorithm for counting the intersections among $n$ axis-parallel 
line segments that runs in $O((n^2/s) \cdot \lg^{4/3} s + n^{4/3} \cdot \lg^{1/3} n)$ time.
The idea is to partition the plane as a grid and to run local plane sweeps
on parts of the plane with truncated segments. 
In Section \ref{batching} we sketch a so-called {\em batching technique} to represent the sweep line for special plane-sweep algorithms using fewer bits than usual, and then utilize this technique in Section \ref{enumerating-axis-parallel} for enumerating the intersections among $n$ axis-parallel line segments in $O((n^2/s) \cdot \lg s \cdot \lg \lg s + n \cdot \lg s + k)$ time, where $k$ is the number of intersections. In Section~\ref{measure} we show how to calculate Klee's measure (the area of the union) for $n$ axis-parallel rectangles in $O((n^2/s) \cdot \lg n + n \cdot \lg s)$ time if the corners of the rectangles are stored in sorted order. In Section~\ref{measure2} we introduce another general technique that we call the {\em multi-scanning technique} where we partition the plane and run several plane sweeps interleaved in a tricky way. 
We use this technique to calculate Klee's measure in $O((n^2/s + n \cdot \lg s) \cdot \sqrt{(n/s) \cdot \lg n})$ time if the corners of the rectangles are unsorted. 
We conclude the paper in Section \ref{comments} with some comments.

\section{A Stretching Technique: Splitting the Input Array}
\label{stretching}

We call a problem {\em decomposable} if its solution can be efficiently calculated by 
partitioning the input into smaller overlapping subsets, computing the partial solutions for 
these subsets, and combining these partial results to form the final outcome.
We also assume that the time needed to combine the results is shorter 
than that for computing the partial solutions.
Examples of such problems that we deal with in this paper are the closest-pair problem
and the axis-parallel line-segments intersections problem. 
For the closest-pair distance, the overall solution is the minimum among the partial solutions for the subproblems.
For the enumeration of the axis-parallel line-segments intersections, 
the overall solution is the union of the non-overlapping partial solutions.
The general line-segments intersections problem is also decomposable, 
and can be handled using the same approach with slight modifications.

The following technique allows us to stretch down the range of the workspace, for which we can 
efficiently solve a decomposable problem, to include smaller values.

Assume that the available workspace is enough to only handle a subset 
of the input that comprises $O(r)$ elements at a time, for some parameter $r \leq n$.  
Let $n$ be the number of elements in the input array.
Split the array into $\lceil n/r \rceil$ \emph{batches} $B_1,\ldots,B_{\lceil n/r \rceil}$
of at most $r$ consecutive elements each (the last batch may have less) and proceed as follows:
For $i=1,\ldots,{\lceil n/r \rceil}$ and $j=i+1,\ldots,{\lceil n/r \rceil}$,
apply the underlying algorithm within $B_i\cup B_j$.
Compute the overall answer by combining the partial results.
As we try all pairs of subproblems, the algorithm 
correctly explores all the possible subproblems $B_i\cup B_j$ for some $i$ and $j$, 
and accordingly produces the output correctly for decomposable problems.

The number of the subproblems handled in sequence is $\Theta(n^2/r^2)$.
Let the time needed to solve a subproblem of size $O(r)$ be $t(r)+k'$, it follows that the
overall time spent by the algorithm is $O((n^2/r^2) \cdot t(r) + k)$, where $k = \sum k'$.

\begin{lemma}
Suppose we know how to solve a decomposable problem $\mathcal{P}$ of size $n$ using
$s' = \Theta(f(n))$ bits in $O(n^2/g(s') + n \cdot \lg s')$ time,
where $f,g:\Nat \rightarrow \Rat$ are functions with $\lg n \leq f(n) \leq n \cdot \lg n$.
For any $s$ where $\lg n \leq s \leq s'$, we can solve any instance $I$ of $\mathcal{P}$ of size $n$ in
$O(n^2/g(s) + (n^2/f^{-1}(s)) \cdot \lg s)$ time with $O(s)$ bits.
In particular, when $f(n)=O(n/\lg n)$ and $g(s) = O(s)$, we can solve $I$ in $O(n^2/g(s))$ time and $O(s)$ bits.

\end{lemma}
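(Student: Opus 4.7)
\section*{Proof plan}

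The plan is to directly instantiate the stretching technique described above, choosing the batch size so that each subproblem fits into the allotted $O(s)$ bits. Concretely, set $r = f^{-1}(s)$, so that a subproblem on a union $B_i \cup B_j$ of two batches has size $O(r)$ and can be solved by the assumed algorithm using $\Theta(f(2r)) = \Theta(f(r)) = \Theta(s)$ bits (here I use the tacit monotonicity and smoothness of $f$, which holds for all the resource functions arising in the paper). Split the input array into $\lceil n/r \rceil$ consecutive batches, invoke the underlying algorithm on every pair $B_i \cup B_j$, and combine the $\Theta((n/r)^2)$ partial answers; since the problem is decomposable, combining is cheaper than producing the partial answers and thus does not dominate.

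The running-time bookkeeping is immediate once the batch size is fixed. A single subproblem on $O(r)$ input items takes
\[
t(r) \;=\; O\!\left(\frac{r^2}{g(s')} + r\cdot \lg s'\right) \;=\; O\!\left(\frac{r^2}{g(s)} + r\cdot \lg s\right),
\]
since $s' = \Theta(f(r)) = \Theta(s)$ by the choice of $r$. Multiplying by the $\Theta(n^2/r^2)$ subproblems gives
\[
O\!\left(\frac{n^2}{g(s)} \;+\; \frac{n^2}{r}\cdot \lg s\right) \;=\; O\!\left(\frac{n^2}{g(s)} \;+\; \frac{n^2}{f^{-1}(s)}\cdot \lg s\right),
\]
which is the claimed bound.

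For the special case $f(n)=O(n/\lg n)$ and $g(s)=O(s)$, I would solve $s = \Theta(r/\lg r)$ for $r$ to obtain $f^{-1}(s) = \Theta(s\cdot \lg s)$ (using $\lg r = \Theta(\lg s)$). Substituting, the second term becomes $O(n^2 \lg s /(s \lg s)) = O(n^2/s) = O(n^2/g(s))$, so it is absorbed by the first term and the total running time simplifies to $O(n^2/g(s))$.

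The only genuinely delicate point is the implicit assumption that $f$ is sufficiently regular so that $f(2r) = \Theta(f(r))$ and $f^{-1}$ is well defined on the relevant interval; without this, a subproblem on $2r$ elements might not fit in $O(s)$ bits. For the functions $f$ that we actually plug in (e.g., $f(n)=n\lg n$ or $f(n)=n/\lg n$) this regularity is trivial, so I would simply state this mild assumption and let the rest of the proof go through as described.
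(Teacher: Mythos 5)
Your proposal is correct and follows essentially the same route as the paper's own proof: choose the batch size $r = f^{-1}(s)$ so that each pairwise subproblem fits in $\Theta(s)$ bits, multiply the per-subproblem cost $t(r)=O(r^2/g(s)+r\cdot\lg s)$ by the $\Theta(n^2/r^2)$ subproblems, and observe that $f(n)=O(n/\lg n)$ gives $f^{-1}(s)=\Omega(s\cdot\lg s)$ for the special case. Your explicit remark about the regularity of $f$ (so that $f(2r)=\Theta(f(r))$) is a point the paper leaves tacit, but it does not change the argument.
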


\begin{proof}
By definition of $\mathcal{P}$, 
we can solve instances of $\mathcal{P}$ of 
size $r=\lceil f^{-1}(s) \rceil$ using $s$ bits in $t(r) = O(r^2/g(s) + r \cdot \lg s)$ time.
By applying the above construction, we can solve $I$ in $O((n^2/r^2) \cdot t(r)) = O(n^2/g(s) + (n^2/r) \cdot \lg s) = O(n^2/g(s) + (n^2/f^{-1}(s)) \cdot \lg s)$ time. 
If $f(n)=O(n/\lg n)$, then $f^{-1}(s) = \Omega(s \cdot \lg s)$, and we can solve $I$ in $O(n^2/g(s) + n^2/s)$.
If in addition $g(s) = O(s)$, the claimed time and space bounds follow.
\end{proof}

\section{Line-Segments Intersections}
\label{intersect}

Given a set of $n$ line segments in the plane, the line-segments-intersections 
problem is to enumerate all the intersection points among these line segments. 
The counting version of the problem is to produce the number of these intersections.
Given a set of red segments and another of blue segments, the bichromatic-intersections 
problem is the problem of reporting the intersections between red segments and blue segments.
An optimal algorithm to enumerate all the intersections that runs in $O(n \cdot \lg n + k)$ 
time was given by Balaban~\cite{Bal95}, where $n$ is the number of segments and $k$ is the number of intersections returned.
Chazelle \cite{Cha93}, improving over Agarwal \cite{Agr90}, showed how to count the intersections among 
$n$ line segments in $O(n^{4/3} \lg^{1/3} n)$ time, and how to report $k$ 
bichromatic intersections in $O(n^{4/3} \lg^{1/3}{n} + k)$ time.
All these algorithms require a linear number of words, i.e., $f(n)=O(n \cdot \log n)$ bits.

If the available workspace is $\Theta(s)$ bits with $\lg n \leq s \leq n \cdot \lg n$, 
we give next a straightforward application of the stretching technique. 
We can apply the reporting algorithms on batches of size $O(r)$ line segments, where 
we choose $r=\Theta(f^{-1}(s))$, i.e, $r = \Theta(s/ \lg s)$.
First we apply Balaban's algorithm for each batch separately, then we apply a bichromatic-intersections
algorithm on every pair of batches (assuming one of them is red and the other is blue).
Note that we cannot apply Balaban's algorithm on pairs of batches, 
for otherwise the partial solutions will be overlapping 
(intersections among the segments of a batch will show up in several partial solutions), 
and hence combining the partial solutions would be problematic.
It follows that $t(r) = O(r^{4/3} \cdot \lg^{1/3} r )$.
The reported intersections are the union of the non-overlapping intersections found by solving the subproblems.
Hence, the overall time for this algorithm is $O((n^2/r^2) \cdot t(r) + k) = O((n^2/r^{2/3}) \cdot \lg^{1/3} r +
k) = O((n^2/s^{2/3}) \cdot \lg s + k)$ time, where $k$ is the number of reported intersections.

In the same vein, one can adapt the counting algorithm for a space-efficient variant. The running time for the counting version will be $O((n^2/s^{2/3}) \cdot \lg s$).

\begin{lemma}
\label{stretch}
Given a read-only array of $n$ elements and $\Theta(s)$ bits of workspace, where $\lg n \leq s \leq n \cdot \lg n$, the planar line-segments-intersections enumeration problem can be solved 
in $O((n^2/s^{2/3}) \cdot \lg s + k)$ time, where $k$ is the number of intersections returned.
The counting version can be solved in $O((n^2/s^{2/3}) \cdot \lg s)$~time.
\end{lemma}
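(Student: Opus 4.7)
The plan is to apply the stretching technique of the preceding section. Since both Balaban's reporting algorithm and Chazelle's bichromatic and counting algorithms use $\Theta(n \lg n)$ bits (i.e., $f(n) = \Theta(n \lg n)$), the largest batch size that fits in $\Theta(s)$ bits is $r = \Theta(f^{-1}(s)) = \Theta(s/\lg s)$. I would first partition the input into $\lceil n/r \rceil$ consecutive batches $B_1,\ldots,B_{\lceil n/r\rceil}$ of at most $r$ segments each, loading the current batch (or the current pair of batches) into the workspace on demand.

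For the enumeration version, I would run Balaban's algorithm on each single batch $B_i$ to report the intra-batch intersections, and Chazelle's bichromatic reporting algorithm on every pair $(B_i,B_j)$ with $i<j$, coloring $B_i$ red and $B_j$ blue, to report the cross-batch intersections. Every unordered pair of input segments belongs to exactly one such subproblem, so the union of the reported points is exactly the desired set and no intersection appears twice. The per-batch pass costs $\sum_i O(r \lg r + k_i) = O(n \lg r + k)$, and the bichromatic pass costs $\sum_{i<j} O(r^{4/3} \lg^{1/3} r + k_{ij}) = O((n^2/r^{2/3}) \cdot \lg^{1/3} r + k)$. Substituting $r = \Theta(s/\lg s)$ yields
\[
O\bigl((n^2/s^{2/3}) \cdot \lg^{2/3} s \cdot \lg^{1/3} s + k\bigr) = O\bigl((n^2/s^{2/3}) \cdot \lg s + k\bigr),
\]
as claimed. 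For the counting version I would instead invoke Chazelle's counting algorithm in both passes and sum the returned partial counts; this removes the $+k$ terms and gives $O((n^2/s^{2/3}) \cdot \lg s)$.

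The main subtlety, and the only place where the argument departs from a mechanical application of the stretching lemma, is avoiding duplicate output between subproblems. One is tempted to simply run Balaban's monochromatic algorithm on each union $B_i \cup B_j$, but then every intersection between two segments of the same batch would be reported (or counted) in $\Theta(n/r)$ different subproblems, and deduplicating these without extra space would destroy the time bound. Splitting the work into a monochromatic per-batch pass plus a bichromatic cross-pair pass makes the partial outputs pairwise disjoint, so enumeration combines by concatenation and counting by summation. Verifying that a batch of $r = \Theta(s/\lg s)$ segments, or a pair of such batches, really does fit in $\Theta(s)$ bits under the working-space usage of Balaban's and Chazelle's algorithms is immediate from $r \lg r = \Theta(s)$, and completes the plan.
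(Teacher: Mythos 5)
Your proposal is correct and follows essentially the same route as the paper: batches of size $r=\Theta(s/\lg s)$, Balaban's algorithm on each single batch, a bichromatic reporting (resp.\ counting) algorithm on each pair of batches to keep the partial solutions disjoint, and the identical time calculation. The duplicate-avoidance subtlety you highlight is exactly the point the paper also makes when it explains why Balaban's algorithm cannot be run on unions of two batches.
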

\section{Closest Pair}
\label{closest}

Given a set of $n$ points in the plane, the planar closest-pair problem is to identify a pair of
points that are closest to each other.

Assume for the moment that the available workspace is $\Theta(s)$ bits, where $\sqrt{n} \cdot \lg n \leq s \leq n \cdot \lg n$. 
In a first step, we produce the points in sorted order according to their $x$-coordinate values
using an adjustable navigation pile, and consider them in order in groups having
$\lceil s/\lg n \rceil$ points each (except possibly the last group). 
Call the vertical regions containing these groups the {\it vertical strips},
and call the boundary vertical lines separating the vertical strips the {\it vertical
separators}---if necessary, rotate the plane slightly.
We deal with the points in our workspace by storing, for each of these points, 
$O(\lg n)$ bits of the index of the point in the input array.
As for the vertical separators, we store in $O(\lg n)$ bits the index of the horizontally closest point to each separator.
Since there are at most $m = \lceil (n/s) \cdot \lg n \rceil$ vertical strips,
references to the $x$-coordinate values of all the vertical separators can be stored in $O((n/s) \cdot \lg^2 n)$ bits, which is $O(s)$ as long as $s = \Omega(\sqrt{n} \cdot \lg n)$. 
The entities of all the separators can then be simultaneously stored within the available workspace.  
Additionally, all the points of a vertical strip can fit in the available workspace.
Thus, a standard closest-pair algorithm \cite{CorLRS09} can be applied to identify the closest pair  
among the points of each vertical strip one after the other. 
We then find the pair with the minimum closest distance among all the vertical subproblems, and call this minimum distance $\delta$.

In a second step, 
we produce the points in sorted $y$-coordinate order using another adjustable navigation pile. 
We now use a standard idea from the divide-and-conquer algorithm for the closest-pair problem.
We only retain the points that lie within a horizontal distance $\delta$ from
any of the vertical separators and ignore the other points.
Call the selected points the {\it candidate points}.
We consider the candidate points in the $y$-coordinate order in groups having
$8 \cdot m$ points each (except the last group that may have less points). 
Call the horizontal regions containing these groups the {\it horizontal strips}.
Note that references to the points of a horizontal strip can be stored in $O((n/s) \cdot \lg^2 n) = O(s)$ bits, which can all fit in the available workspace.
We can then apply a standard closest-pair algorithm within the working storage to identify 
the closest pair among the candidate points of every two consecutive horizontal strips in order. 
Let $\delta'$ be the minimum closest distance among all the horizontal subproblems.
Finally, we return $\min(\delta,\delta')$ as the closest-pair distance.  

We prove next the correctness of the algorithm. 
We only have to show that the restriction to the points close to the
vertical separators in the second step is correct. We slightly generalize
the proof for the standard divide-and-conquer algorithm for the closest-pair problem.
Since the distance between any pair of points within a vertical strip is at least $\delta$, 
any point that is at horizontal distance more than $\delta$ from all the vertical separators 
can not be closer than $\delta$ to any other point.
We then need to only proceed with the candidate points that lie within a horizontal distance $\delta$ from
any of the vertical separators. Fix a candidate point $p$.
Given a specific vertical separator, for the candidate points above $p$ to be closer than $\delta$ to $p$ they must lie together with $p$ within a $2\delta \times \delta$ rectangle centered at the vertical separator.
Note that there could be at most $8$ points above $p$ within this rectangle whose distances to $p$ are less than $\delta$,
since 6 circles of diameter $\delta$ cover the whole rectangle and 
there can be at most one point in the left and the right circles as well as
at most two points in the middle circles.
For an illustration of this fact, see Fig.~\ref{fig:0}. 
\begin{figure}[ht]
    \begin{center}
       \scalebox{1}{\includegraphics{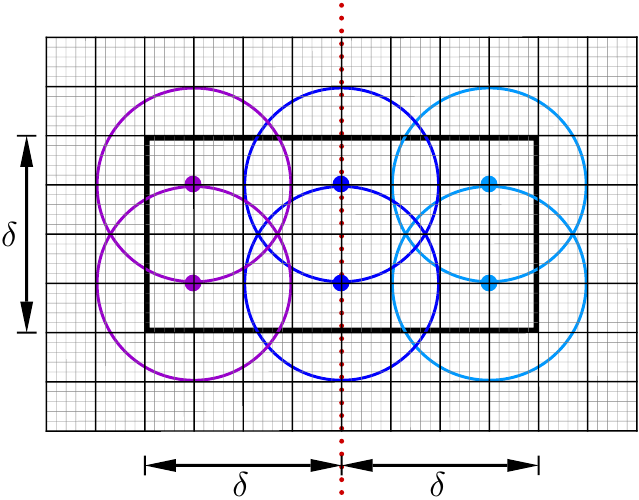}}
    \end{center}
    \caption{A vertical separator (dotted line) and a $2\delta \times
    \delta$ rectangle enclosing
    possible candidate points. The rectangle can be covered with 8 circles
    of diameter $\delta$.}
    \label{fig:0}
\end{figure}
Since there are $m$ separators, the number of candidate points above $p$ to be checked 
for possibly having a distance less than $\delta$ from $p$ is at most $8 \cdot m$; 
no other point above $p$ can be at distance less than $\delta$ from $p$.
(Actually, it suffices to check only $5$ candidate points above $p$ for each separator \cite[Exercise 33.4-2]{CorLRS09}.)
Obviously, the $p$-related candidate points must be consecutive in the $y$-coordinate values.
Since we store $8\cdot m$ candidate points per strip, the $p$-related candidate points above $p$ lie in only two horizontal strips, the horizontal strip that spans $p$ and the horizontal strip above it.
We conclude that we need to only consider the mutual distances 
among the points of each two consecutive horizontal strips. 

The time needed to produce the points in sorted order in both coordinates 
is bounded by the time for sorting using the adjustable navigation pile, which is $O(n^2/s + n \cdot \lg s)$ \cite{AsaEK13}.
The time needed to execute the standard closest-pair algorithm for all the strips
is $O(n \cdot \lg s)$ \cite{CorLRS09}. The time needed to check whether each point is close 
to one of the separators or not is $O(n \cdot \lg n) = O(n \cdot \lg s)$
using a binary search among the $x$-coordinates of the separators for each point. 
Hence, the overall running time of the algorithm is $O(n^2/s + n \cdot \lg s)$.   

Assume now that we have
$\Theta(s)$ bits available, where $\lg n \le s < \sqrt{n} \cdot \lg n$.
Let $r = s^2/ \lg^2 s$.
As $s= \Theta(\sqrt{r} \cdot \lg r)$, we can apply the
above algorithm on instances of size $\Theta(r)$.
In such a case, the running time for each of theses instance would be $t(r) = O(r^2/s + r \cdot \lg s) = O(r^2/s)$.
We then divide the input into $\lceil n/r \rceil$ batches of points and apply the stretching technique.
We compute the closest pair within every pair of batches, and return the overall closest pair.
The space needed is indeed $O(s)$, and the time consumed is $O(({n/r})^2 \cdot t(r)) = O(n^2/s)$.

\begin{lemma}
Given a read-only array of $n$ elements and $\Theta(s)$ bits of workspace, where $\lg n \leq s \leq n \cdot \lg n$, 
the planar closest-pair problem can be solved in $O(n^2/s + n \cdot \lg{s})$ time.
\end{lemma}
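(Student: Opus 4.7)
The plan is to split into two regimes based on $s$ and treat the large-workspace regime directly, then reduce the small-workspace regime to it via the stretching technique of Section~\ref{stretching}.

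For $\sqrt n \cdot \lg n \le s \le n \cdot \lg n$, I would adapt the standard plane-sweep/divide-and-conquer algorithm. Using the adjustable navigation pile, I would stream the points in sorted $x$-order and partition them into $m = \lceil (n/s)\cdot \lg n\rceil$ consecutive vertical strips holding roughly $s/\lg n$ points each. Because $s = \Omega(\sqrt n \cdot \lg n)$, the $O((n/s)\cdot \lg^2 n)$ bits needed to simultaneously store indices of all vertical separators fit in the workspace, and each single vertical strip also fits. Running a textbook in-memory closest-pair algorithm inside every strip produces a minimum intra-strip distance $\delta$. I would then stream the points in sorted $y$-order through a second navigation pile, retain as candidate points only those within horizontal distance $\delta$ from some separator (checked by binary search over the separators' $x$-coordinates), group them into horizontal strips of $8m$ candidates each, and apply the textbook algorithm to every two consecutive horizontal strips to obtain $\delta'$. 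The answer is $\min(\delta,\delta')$.

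The main obstacle is correctness of the horizontal-strip size $8m$, which generalizes the classical divide-and-conquer packing argument from one separator to $m$ separators. For a fixed candidate point $p$ and a fixed vertical separator, any candidate above $p$ at distance less than $\delta$ from $p$ lies in a $2\delta\times\delta$ rectangle straddling the separator; that rectangle is covered by a constant number of $\delta$-circles, giving at most $8$ such candidates per separator (Fig.~\ref{fig:0}). Summed over all $m$ separators this yields at most $8m$ candidates above $p$; since candidate points above $p$ are consecutive in $y$-order, they lie in $p$'s horizontal strip together with the one immediately above, so it suffices to examine consecutive pairs of horizontal strips.

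For the time analysis, sorting via the adjustable navigation pile costs $O(n^2/s + n \cdot \lg s)$; the textbook closest-pair calls across all vertical and horizontal subproblems sum to $O(n \cdot \lg s)$; and identifying candidates adds $O(n \cdot \lg n) = O(n \cdot \lg s)$ for the binary searches. Together these give $O(n^2/s + n \cdot \lg s)$ in the large-workspace regime. For the remaining regime $\lg n \le s < \sqrt n \cdot \lg n$, I would set $r = s^2/\lg^2 s$ so that $s = \Theta(\sqrt r \cdot \lg r)$ and apply the above algorithm on batches of size $\Theta(r)$, each taking $t(r)=O(r^2/s + r\cdot \lg s)=O(r^2/s)$ time. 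The stretching technique from Lemma~\ref{stretch}'s preceding framework then yields an overall bound of $O((n/r)^2 \cdot t(r)) = O(n^2/s)$, completing the claim across the full range $\lg n \le s \le n \cdot \lg n$.
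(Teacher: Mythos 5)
Your proposal matches the paper's proof essentially step for step: the same two-phase plane sweep with vertical strips of $\lceil s/\lg n\rceil$ points yielding $\delta$, the same candidate filtering and horizontal strips of $8m$ points justified by the same per-separator packing argument in a $2\delta\times\delta$ rectangle, the same time accounting, and the same reduction via the stretching technique with $r = s^2/\lg^2 s$ for the regime $\lg n \le s < \sqrt{n}\cdot\lg n$. I see no gaps; this is the paper's argument.
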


It is well-known that the closest-pair algorithm
can be generalized from two to higher
dimensions~\cite{CorLRS09} to run in $O(n \lg^{d-1} n)$ time in $d$ dimensions.
Applying the stretching technique in a similar way as described above, we can solve the closest-pair problem in $d$
dimensions with $\Theta(s)$ bits of workspace in $O(n^2/s + n \cdot \lg^{d-1}{s})$ time, 
where $\lg n \leq s \leq n \cdot \lg n$.
\section{Counting Axis-Parallel Line-Segments Intersections}
\label{counting-axis-parallel}

Given a set of $n$ axis-parallel (horizontal or vertical) line segments in the plane,
we want to count the intersection points among these line segments. 

Assume for the moment that the available workspace is $\Theta(s)$ bits, where $n^{2/3} \cdot \lg n \leq s \leq n \cdot \lg n$. 
First, we produce the endpoints of the line segments in sorted order according to their $x$-coordinate values
using an adjustable navigation pile, and consider them in order in groups having
$\lceil s/\lg n \rceil$ points each (except possibly the last group that may have fewer points). 
Again, call these groups the {\em vertical strips},
and call the boundary lines separating the strips the {\em separators}.
Since there are $\lceil (n/s) \cdot \lg n \rceil = O(n^{1/3})$ vertical strips,
references to the $x$-coordinate values of all the separators can be stored within the workspace.  
We associate a line segment to a strip if at least one of its two endpoints lie inside the strip.
The references to the line segments of a vertical strip can all fit in the workspace.
We can then apply a standard line-segments-intersections counting algorithm to each vertical strip one after the other, 
and add these counts together. See the left side of Fig.~\ref{fig:1}.

\begin{figure}[b!]
    \begin{center}
       \scalebox{0.9}{\includegraphics{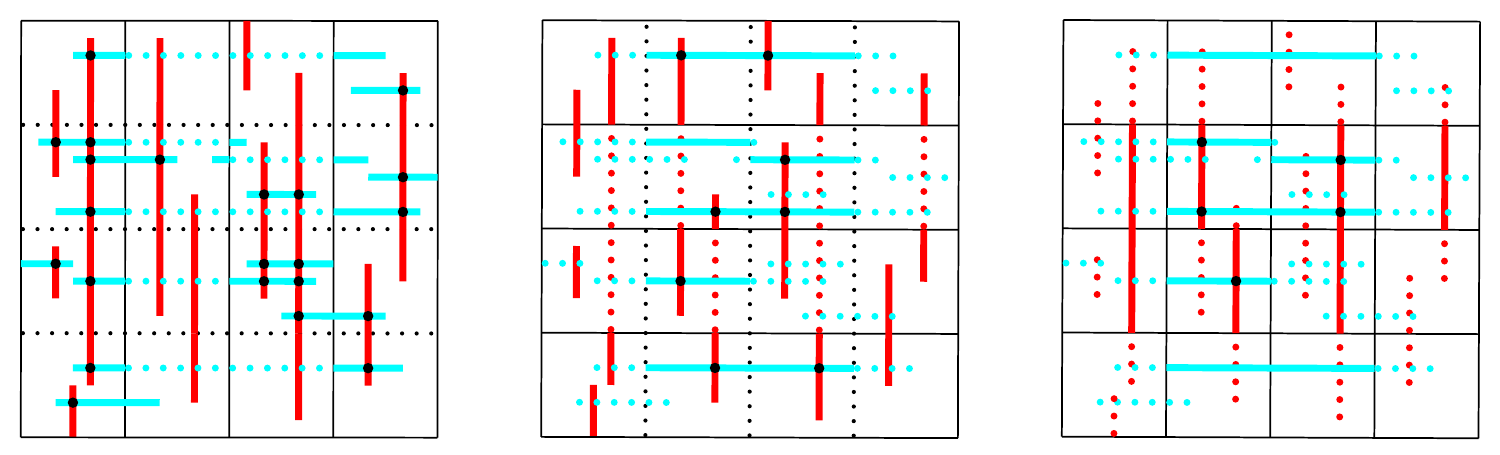}}
    \end{center}
    \caption{Counting axis-parallel line segments in three {\em phases}. The black dots
    on the crossings of two segments show the intersection points that are counted in
    each phase.}
    \label{fig:1}
  \end{figure}

Subsequently, we produce the points in sorted order according to their $y$-coordinate values
using another adjustable navigation pile, and consider them in {\em
horizontal strips} having $\lceil s/\lg n \rceil$ points each (except possibly the last group that may have less points). 
It follows that the $O(n^{1/3})$ references to the so-called {\em horizontal separators} can be simultaneously 
stored in the workspace. In a similar fashion as above, 
we apply a line-segments-intersections counting algorithm to each horizontal strip one after the other, 
and add these counts to the accumulated count.  
To avoid counting intersections twice, we truncate the horizontal segments while dealing with them 
such that each new endpoint lies on the closest vertical separator to the old endpoint intersecting the segment.
Note that the intersections of the truncated parts of the horizontal segments with vertical segments
have already been accounted for while dealing with the vertical strips. 
See the middle of Fig.~\ref{fig:1}.

Let $\mathcal{R}_{i,j}$ be the {\em cell} formed by the intersection of 
the $i$th horizontal strip with the $j$th vertical strip. 
A line segment {\em spans} a cell if it crosses two of the cell's boundaries. 
What is left is to account for the intersections among these spanning segments and add it to the accumulated counts.
We show next how to count the spanning horizontal segments for each cell. The treatment for the vertical segments is similar.
A line segment is {\em interior} to a cell if both its endpoints lie inside the cell.
For each cell $\mathcal{R}_{i,j}$, we store the count $b_{i,j}$ of horizontal segments beginning in
the cell, the count $f_{i,j}$ of horizontal segments finishing in the cell, and the count $t_{i,j}$ of the horizontal segments interior to the cell. Since there are $O(n^{2/3})$ cells, all these values can be stored in $O(n^{2/3} \cdot \lg n)$ bits, which is $O(s)$ when $s \geq n^{2/3} \cdot \lg n$.
For every horizontal segment, we locate the starting and ending cells using binary search among the separators,
and increment the corresponding counters in accordance. 
We then scan the cells of every horizontal strip sequentially while calculating $e_{i,j}$ 
the number of horizontal segments {\em entering} $\mathcal{R}_{i,j}$, 
i.e., the number of segments that have a non-empty intersection with
$\mathcal{R}_{i,j}$ and $\mathcal{R}_{i,j-1}$;
this is done using $e_{i,0} = 0$ and $e_{i,j} = e_{i,j-1} + b_{i,j-1} - f_{i,j-1}$.
We then compute the number of horizontal segments spanning $\mathcal{R}_{i,j}$ as $e_{i,j} - f_{i,j} + t_{i,j}$.
The number of intersections of the spanning segments of $\mathcal{R}_{i,j}$ is the product of
its spanning horizontal and vertical segments. 
We finally sum the counts of these intersections for all cells. 
See the right side of Fig.~\ref{fig:1}.  
 
The time needed to produce the endpoints in sorted order in both coordinates 
using the adjustable navigation pile is $O(n^2/s + n \cdot \lg s)$ \cite{AsaEK13}.
The time needed to execute the standard segments-intersection counting algorithm for all the strips
is $O(n^{4/3} \cdot \lg^{1/3} n)$. The time needed to perform binary search among the separators is $O(n \cdot \lg s)$.
The time needed to count the intersections of the spanning segments of all the cells is constant per cell 
and sums up to $O(n^{2/3})$.
It follows that the overall running time of the algorithm is $O(n^{4/3} \cdot \lg^{1/3} n)$.   

Assume now that we have $\Theta(s)$ bits available, where $\lg n \le s < n^{2/3} \cdot \lg n$.
Let $r = s^{3/2}/ \lg^{3/2} s$. Since $s=\Theta(r^{2/3}\lg r)$,
we can apply the above algorithm on instances of $\Theta(r)$ elements.
We divide the input array into $\lceil n/r \rceil$ batches of consecutive segments and apply the stretching technique.
First apply the algorithm on instances for every batch individually, then on instances for every pair of batches.
Using these computed counts, the overall count can be easily calculated.
The running time for each instance would be $t(r) = O(r^{4/3} \cdot \lg^{1/3} s)$.
The overall time consumed in this case is $O(({n/r})^2 \cdot t(r)) = O((n^2/s) \cdot \lg^{4/3} s)$.

\begin{lemma}
Given a read-only array containing the endpoints of $n$ line segments and $\Theta(s)$ bits of workspace, 
where $\lg n \leq s \leq n \cdot \lg n$, counting the planar axis-parallel line-segments intersections can 
be done in $O((n^2/s) \cdot \lg^{4/3} s + n^{4/3} \cdot \lg^{1/3} n)$ time.
\end{lemma}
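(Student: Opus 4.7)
The plan is to split the parameter range at $s = n^{2/3} \cdot \lg n$ and handle the two regimes separately. In the large-workspace regime I would impose a two-level grid on the plane whose cells are small enough that the relevant separators and per-cell counters all fit in memory. In the small-workspace regime I would reduce to the large case through the stretching technique of Section~\ref{stretching} with an appropriately chosen batch size.

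For $n^{2/3} \cdot \lg n \leq s \leq n \cdot \lg n$, I would first stream the endpoints in $x$-order via an adjustable navigation pile and cut them into \emph{vertical strips} of $\lceil s/\lg n\rceil$ points each. There are only $O((n/s)\cdot \lg n) = O(n^{1/3})$ such strips, so the $x$-coordinates of all vertical separators fit simultaneously in $O(s)$ bits, and the segments incident to one strip also fit. I would then run a classical $O(m^{4/3}\lg^{1/3} m)$ counting algorithm within each strip and sum the counts. Next, I would repeat the construction in the $y$-direction, but to avoid double-counting the crossings already captured by the first phase I would truncate each horizontal segment to the portion strictly between the two vertical separators nearest its endpoints before handing it to the horizontal-strip counter.

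The main obstacle is accounting for intersections between segments that span a whole grid cell in both directions, because after truncation these never appear as endpoint-bearing segments in any strip. My plan is to attach to every cell $\mathcal{R}_{i,j}$ three counters $b_{i,j}$, $f_{i,j}$, $t_{i,j}$ recording how many horizontal segments respectively begin in, finish in, or are fully interior to the cell, together with symmetric counters for vertical segments. Since there are only $O(n^{2/3})$ cells, all counters fit in $O(n^{2/3}\lg n) \subseteq O(s)$ bits. A single left-to-right pass along each horizontal strip using the recurrence $e_{i,j} = e_{i,j-1} + b_{i,j-1} - f_{i,j-1}$ gives the number of segments entering $\mathcal{R}_{i,j}$ from the left; the number of horizontal segments spanning the cell is then $e_{i,j} - f_{i,j} + t_{i,j}$. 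The spanning-intersection contribution of $\mathcal{R}_{i,j}$ is the product of its spanning horizontal and vertical counts, and I sum these over the grid. Adding the sorting cost $O(n^2/s + n\lg s)$, the in-strip counting cost $O(n^{4/3}\lg^{1/3} n)$, the $O(n \lg s)$ cost of binary-searching the separator for each endpoint, and the $O(n^{2/3})$ cell-scan cost yields the bound $O(n^{4/3}\lg^{1/3} n)$ in this regime.

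For the remaining regime $\lg n \leq s < n^{2/3}\cdot\lg n$, I would set $r = s^{3/2}/\lg^{3/2} s$, so that $s = \Theta(r^{2/3}\lg r)$ and the algorithm above applies to inputs of size $r$ with in-batch running time $t(r) = O(r^{4/3}\lg^{1/3} s)$. Cutting the input into $\lceil n/r\rceil$ consecutive batches and invoking the stretching technique, I would run the algorithm on every pair of batches; counting is decomposable (treat one batch as red and the other as blue to forbid duplication), so the partial counts simply sum. The resulting running time is $O((n/r)^2 \cdot t(r)) = O((n^2/s)\lg^{4/3} s)$, which combined with the large-$s$ bound gives the claimed $O((n^2/s)\lg^{4/3} s + n^{4/3}\lg^{1/3} n)$.
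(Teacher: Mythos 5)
Your proposal matches the paper's own proof essentially step for step: the same vertical/horizontal strip decomposition with truncation for $s \geq n^{2/3}\cdot\lg n$, the same per-cell counters $b_{i,j}, f_{i,j}, t_{i,j}$ with the recurrence $e_{i,j}=e_{i,j-1}+b_{i,j-1}-f_{i,j-1}$ to count spanning segments, and the same stretching reduction with $r=s^{3/2}/\lg^{3/2}s$ for smaller workspaces. The only cosmetic difference is that the paper combines the batch counts by first counting within each batch individually and then within each pair (rather than phrasing it as a red--blue count), which amounts to the same inclusion--exclusion bookkeeping.
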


\section{A Batching Technique: Processing Sweep-Line Events in Batches}
\label{batching}

Plane sweep is one of the most common algorithmic techniques in
computational geometry.  The idea is to move a line across
the plane and to maintain the intersection of that line with the
objects of interest. Many geometric problems have been solved
using this paradigm~\cite{BenO79,BerOKO08,CorLRS09}.
We assume that the sweep line moves over the plane from left to right.
Only at particular {\em event points} is an update of the status required.
Typically, a plane-sweep algorithm uses a priority queue ({\em event queue}) to produce the upcoming events in order 
and a balanced binary search tree ({\em status structure}) to store and query the objects 
that cross the sweep line in order.
Since $\Theta(n)$ objects might be part of the search tree, 
a standard plane-sweep algorithm needs $\Theta(n \cdot \lg n)$ bits.

In the following we show that, if the given objects are axis-parallel, one
may reduce the working storage of the status structure to $\Theta(n)$ bits
by processing the events in batches. 
The ideas of this technique are sketched next.

Suppose our plane is divided into $m$ vertical and horizontal strips 
such that each strip contains $O(n/m)$ local objects, where an object is {\em local} 
for a strip if it starts or ends within the strip. As before, the boundaries
of the strips are called {\em separators}.
The intersection of a horizontal strip with a vertical strip is called a {\em cell}.
To apply the batching technique, we need the following two properties: 
(1) All the events of the event queue are on vertical separators,
i.e., they result from so-called {\em horizontally spanning objects}.
(2) All the objects of the status structure start and end on horizontal separators, 
i.e., they are so-called {\em vertically spanning objects}.
The situation is illustrated in Fig.~\ref{fig:2}.
Assume the available workspace is $\Theta(s)$ bits, where $n \leq s \leq n \cdot \lg n$. 
By setting $m = \lceil (n/s) \cdot \lg n \rceil$, we 
can store references to all local objects of a strip and references to the coordinates of the separators in the working storage.

\begin{figure}[h!]
    \begin{center}
       \scalebox{0.9}{\includegraphics{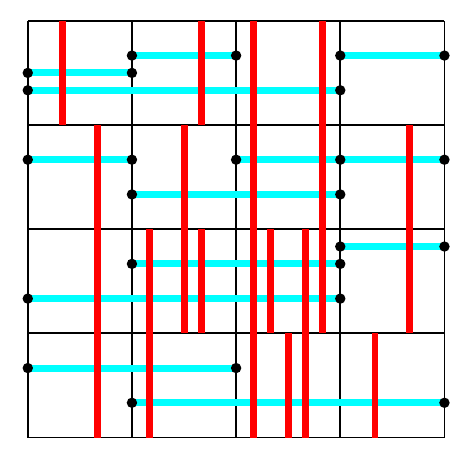}}
    \end{center}
    \caption{The plane is partitioned into cells with vertical and
    horizontal separators. The events are all on vertical separators and are
    shown through black dots.}
    \label{fig:2}
  \end{figure}

Because of (1) and (2), it is enough to update
the status structure only once per vertical strip with a batch of objects.
To 'represent' the status structure, we split the vertical strip to $m$ cells 
formed by the intersections with the horizontal strips.
We store the indices of the $O(n/m)$ 
vertically spanning objects of the cells of the vertical strip 
in an array using a total of $O((n/m) \cdot \lg n) = O(s)$ bits. In addition, we store for each 
of the $m$ cells a bit vector of $O(n/m)$ bits indicating whether each of these
objects spans the cell or not.
Over and above, for each bit vector of a cell, we build a rank-select data structure
that allows us to scan the vertical spanning objects of the cell in constant time per object. 
The bit vectors and the rank-select structures are enough to represent the status structure.
This way, the sweep line can be stored in a total of $O(s)$ bits.
 
We use an adjustable navigation pile as our event queue to produce the events and the spanning objects in order.
Since $s \geq n$, the time to produce all the events in order throughout the procedure is $O(n \cdot \lg s)$.
When the sweep line moves to a new vertical strip, 
we update the representation of the status structure as follows:
The vertical spanning objects in the new strip are produced by the navigation pile.
For each such object, the cells it spans are allocated in $O(m)$ time per object
by simply comparing the object coordinates with the horizontal separators.
The bit-vectors entries and the rank-select structures are updated accordingly.
The time to update the status structure (build a new one) is $O(n)$. 
Throughout the algorithm, the total time to update the status structure
is $O(n\cdot m) = O((n^2/s) \cdot \lg n) = O((n^2/s) \cdot \lg s)$.

What is left is to show how to allocate an event point within the status structure representing the sweep line.
We would be satisfied with only identifying the cell that contains this event point within the vertical strip.
We do that using binary search against the $m$ horizontal separators,
consuming $O(\lg m) = O(\lg \lg s)$ time per event point.

\begin{lemma}
Using the batching technique, a sweep can be performed on a plane with $n$ objects,
using a data structure that can be stored in $\Theta(s)$ bits, where $n \leq s \leq n \cdot \lg n$.
The sweep makes a total of $O((n/s) \cdot \lg s)$ stopovers, and the data structure can be rebuilt in $O(n)$ time per stopover
plus a total of $O(n \cdot \lg s)$ time,
and queried in $O(\lg \lg s)$ time per event. Handling all events at each stopover,
the total time consumed is $O((n^2/s) \cdot \lg s \cdot \lg \lg s + n \cdot \lg s)$.
\end{lemma}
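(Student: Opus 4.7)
The plan is to formalize the construction sketched in this section and then sum the resulting costs. First I would fix $m = \lceil (n/s) \cdot \lg n \rceil$ and verify that each component of the sweep-line representation fits within the $\Theta(s)$-bit budget. The index array of the $O(n/m)$ vertically spanning objects occupies $O((n/m) \cdot \lg n) = O(s)$ bits; the $m$ cell-level bit vectors together use $O(n) = O(s)$ bits, since $s \geq n$; the accompanying rank-select structures add only $o(n)$ bits; and the $O(m)$ separator coordinates fit in $O(m \cdot \lg n) = O(\lg^2 n) = O(s)$ bits. Since $\lg n = \Theta(\lg s)$ in the range $n \leq s \leq n \cdot \lg n$, this gives $m = \Theta((n/s) \cdot \lg s)$, matching the stated stopover count.

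Next I would bound the per-stopover cost. At each new vertical strip, the adjustable navigation pile streams its vertically spanning objects; for every streamed object I identify in $O(m)$ time which cells of the strip it spans by comparing its endpoints against the $m$ horizontal separators, and flip the appropriate bits in the cell bit vectors. Rebuilding the rank-select structures over the $O(n)$-bit representation takes $O(n)$ time per stopover. Summed over the $m$ stopovers, this yields $O(n \cdot m) = O((n^2/s) \cdot \lg s)$, and streaming all events and spanning objects from the two navigation piles contributes the additional $O(n \cdot \lg s)$ stated in the lemma.

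I would then account for event handling. By property~(1) every event lies on a vertical separator and is processed at that separator's stopover; locating the cell of the current strip containing the event reduces to a binary search among the $m$ horizontal separators in $O(\lg m) = O(\lg \lg s)$ time. By property~(2) each horizontally spanning object generates at most one event per vertical separator it crosses, so the total number of events is $O(n \cdot m)$ and the cumulative query cost is $O((n^2/s) \cdot \lg s \cdot \lg \lg s)$. Adding the rebuild and streaming terms gives the claimed total of $O((n^2/s) \cdot \lg s \cdot \lg \lg s + n \cdot \lg s)$.

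I do not anticipate a serious obstacle; the heart of the argument is bookkeeping. The one place requiring care is the $O(n \cdot m)$ bound on the number of events: it depends crucially on properties~(1) and~(2), which guarantee that events occur only on vertical separators and that the status structure only holds vertically spanning objects, so that no event is missed or double-counted. A rigorous write-up should state these invariants explicitly and verify that the underlying plane-sweep problem satisfies them before invoking the lemma.
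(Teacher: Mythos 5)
Your proposal is correct and follows essentially the same route as the paper: the same choice of $m = \lceil (n/s)\cdot\lg n\rceil$, the same $O(s)$-bit accounting for the index array, cell bit vectors and rank-select structures, the same $O(n)$-per-stopover rebuild with $O(m)$ time per spanning object, the navigation pile contributing the $O(n\cdot\lg s)$ term, and the $O(\lg m)=O(\lg\lg s)$ binary search per event summed over $O(n\cdot m)$ events. Your closing remark that properties (1) and (2) must be verified for the concrete problem before invoking the lemma is a fair and explicit statement of what the paper leaves implicit.
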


\section{Enumerating Axis-Parallel Line-Segments Intersections}
\label{enumerating-axis-parallel}

Assume for the moment that the available workspace is $\Theta(s)$ bits, where $n \leq s \leq n \cdot \lg n$.

We use the same ideas as our counting algorithm. 
As before, we split the plane into $m$ horizontal and vertical strips, where $m=\lceil (n/s) \cdot \lg n\rceil$.
Accordingly, we enumerate the intersections among the local parts of the segments within the strips by applying a standard line-segments-intersection enumeration algorithm.   

By truncating the segments, we assume from now on that all the endpoints 
lie on the boundaries of the cells and the segments span the cells they cross.
Note that each horizontal line segment that spans a cell must intersect 
all the vertical segments spanning the same cell. 
By applying the ideas of the batching technique, we store the vertical spanning line segments that 
lie in the current vertical strip and build a status structure that consumes $\Theta(s)$ bits in $O(n)$ time. 
Using this data structure it is possible to enumerate the vertical segments that span a given cell 
in time proportional to the number of the reported segments. 
For each horizontal segment, we check if it spans any of the cells of the sweep line.
We do that using binary search for each horizontal segment against the $m$ horizontal separators.
After every binary search for a horizontal segment, we query the status structure  
to find the vertical segments spanning the same cell, and
so their intersections with the horizontal segment are computed and reported. 
After locating the crossing cells of all the horizontal segments with the sweep line, 
the sweep line is advanced to the next vertical strip.

The total time needed to execute the standard algorithm locally within all the strips
is $O(n \cdot \lg s)$, which matches the time bound to build the status structure 
of all vertical strips using the batching technique. 
The time to perform binary search for each of the horizontal segments against the $m$ 
cells of the status structure is $O(n \cdot \lg m)$.
Hence, we can compute all intersection points of a vertical strip 
in $O(n \cdot \lg m + k')$ time, where $k'$ is the number of these intersections.
Since we repeat these actions for every vertical strip as the sweep line advances, the 
total time  is $O(n \cdot m \cdot \lg m + k) = O((n^2/s) \cdot \lg s \cdot \lg \lg s
 + k)$, where $k$ is the number of intersections returned.   
Since we can partition the  plane into strips using a navigation pile in
$O(n^2/s + n \cdot \lg s)$ time, the total time consumed by the whole algorithm is 
$O((n^2/s) \cdot \lg s \cdot \lg \lg s + n \cdot \lg s + k)$. 

Assume next that we have $\Theta(s)$ bits, where $\lg n \le s < n$. Let $r = s$.
We can then apply the above algorithm on instances of size $\Theta(r)$.
In such a case, the running time for each instance would be $t(r)+k' = O((r^2/s) \cdot \lg s \cdot \lg \lg s + r \cdot \lg s +
k') = O((r^2/s) \cdot \lg s \cdot \lg \lg s + k')$, where $k'$ is the number of intersections.
We then divide the input into $\lceil n/r \rceil$ batches of segments and apply the stretching technique
on pairs of batches, a batch of vertical segments with a batch of horizontal segments.
The total time consumed is $O(({n/r})^2 \cdot t(r) + k) = O((n^2/s) \cdot \lg s \cdot \lg \lg s +
k)$, where $k$ is the number of intersections returned.   
 
\begin{lemma}
Given a read-only array containing the endpoints of $n$ line segments and $\Theta(s)$ bits of workspace, where $\lg n \leq s \leq n \cdot \lg n$, enumerating the planar axis-parallel line-segments intersections is done in $O((n^2/s) \cdot \lg s \cdot \lg \lg s + n \cdot \lg{s} + k)$ time, where $k$ is the number of intersections returned.
\end{lemma}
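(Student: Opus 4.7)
The plan is to combine the batching technique from Section~\ref{batching} with the stretching technique of Section~\ref{stretching}, splitting cases by the workspace size. In the large-workspace regime $n \le s \le n \cdot \lg n$, I would partition the plane into $m = \lceil (n/s) \cdot \lg n \rceil$ vertical and $m$ horizontal strips using two navigation piles over the endpoint coordinates; this costs $O(n^2/s + n \cdot \lg s)$ time. Because each strip carries only $\Theta(n/m)$ local endpoints, the segments local to any single strip together with references to all $O(m)$ separators fit inside $\Theta(s)$ bits.

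First I would dispose of intersections that lie inside one strip by running a standard in-strip enumeration algorithm on each strip separately; this contributes $O(n \cdot \lg s + k_0)$ overall, where $k_0$ counts such intersections. For the remaining intersections I would truncate every segment to its portion clipped at the two outermost separators it crosses, so that each surviving truncated endpoint lies on a cell boundary. After this normalization, every surviving horizontal segment fully spans each cell it enters, and likewise for vertical segments --- which is exactly the structural hypothesis required by the batching technique (events on vertical separators, status-structure members spanning cells).

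I would then invoke the batching sweep: at each of the $O((n/s) \cdot \lg s)$ stopovers rebuild the bit-vector plus rank-select representation of the vertically spanning segments in $O(n)$ time, then for each horizontal spanning segment binary-search among the $m$ horizontal separators in $O(\lg \lg s)$ time to identify the cells it crosses, and list the vertical segments spanning each such cell via $\Select$ in $O(1)$ per reported intersection. The totals are $O((n^2/s) \cdot \lg s \cdot \lg \lg s)$ for the sweep, $O(n \cdot \lg s)$ for the navigation piles and binary searches, and $O(k)$ for reporting, matching the claimed bound. For the small-workspace regime $\lg n \le s < n$, I would apply the stretching lemma with $r = s$, running the above algorithm on each pair consisting of one batch of horizontal and one batch of vertical segments; each of the $(n/r)^2$ subproblems costs $O((r^2/s) \cdot \lg s \cdot \lg \lg s + r \cdot \lg s + k') = O(s \cdot \lg s \cdot \lg \lg s + k')$, and these sum to the same overall bound.

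The hard part will be enforcing the batching preconditions without double-counting: intersections that happen to fall exactly on a separator must be attributed to a unique strip or cell, and the reported points must decompose as a disjoint union of intersections local to individual strips and intersections between spanning segments inside single cells. Once this accounting is nailed down, the time analysis follows by plugging directly into the batching lemma and the stretching lemma.
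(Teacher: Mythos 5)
Your proposal follows essentially the same route as the paper: the same strip partition with $m = \lceil (n/s)\cdot\lg n\rceil$, the same local enumeration plus truncation to spanning segments, the same invocation of the batching sweep with rank-select queries, and the same stretching step with $r=s$ on horizontal--vertical batch pairs for small workspace. The accounting issue you flag at the end is handled in the paper exactly as you suggest, by splitting the output into within-strip intersections and spanning-segment intersections inside cells.
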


\section{Measure of Axis-Parallel Rectangles}
\label{measure}

We consider the problem of computing the {\em measure} of a set of
$n$ axis-parallel rectangles, i.e., the size of the area of the union.
The problem was posed by V. Klee, and thus called {\em Klee's measure problem}.
Bentley~\cite{Ben77} described an $O(n \cdot \lg n)$-time
algorithm that can be implemented with $\Theta(n \cdot \lg n)$ bits of working space.
Bentley's algorithm sweeps a vertical line from left to right across the
rectangles and maintains the intersection of the rectangles and the sweep line.
Another algorithm to compute the measure was presented by Overmars and Yap~\cite{OveY88}. A generalization of the algorithm to $d$ dimensions was given by Chan~\cite{Cha13}.

Assume that the available workspace is $\Theta(s)$ bits, where $\lg n \leq s \leq n \cdot \lg n$. 
To compute the measure of a set of $n$ axis-parallel rectangles, we use Bentley's algorithm as a subroutine. 
In this section, we restrict ourselves to the case where the corners of the rectangles are stored sorted by their $x$-coordinates. This restriction is dropped in the next section.

We split the plane into $m=\Theta((n/s) \cdot \lg n)$ horizontal strips,
where each strip consists of $\Theta(s/ \lg n)$ rectangle corners 
and accordingly fit in the available workspace.
A rectangle is {\em spanning} a strip if its vertical segments cross the two separators of the strip.
We process the strips in sorted $y$-coordinate order, one after the other.
By using an adjustable navigation pile, we produce and store the
rectangles cornered within each strip in sequence. 
Before processing a strip and storing the rectangles, 
we truncate those rectangles such that they are shrunk to their intersection with the strip.
We would then run Bentley's algorithm on these rectangles. 
However, we need to also take into consideration the rectangles spanning the strip.
We show next how to do that efficiently.

We horizontally scan the strip and keep track of the spanning segments and the corners.
We accumulate as a global variable the width $\mathcal{W}$ of the union of the spanning rectangles so far.
To do that, we maintain $z$ as the difference between the 
number of scanned spanning segments that are left boundaries of a rectangle and the number of
scanned spanning segments that are right boundaries.
Whenever $z$ becomes positive, we record this coordinate as $x_1$. 
Whenever $z$ returns back to zero, we record this coordinate as $x_2$; 
we have just passed over a spanning area, and accordingly update $\mathcal{W}$ by adding to it the value $x_2 - x_1$.
Whenever we meet a corner, we update its $x$-coordinate value as follows.
If $z$ is positive (the corner is in a spanning area), first set the $x$-coordinate of this corner to $x_1$.
Either way, whether $z$ is positive or zero, we subtract the current value 
of $\mathcal{W}$ from the $x$-coordinate of the corner.  
This process of relocating the corners is called {\em simplifying} the rectangles in~\cite{Cha13}.
After finishing the scan, we apply Bentley's algorithm to 
the relocated corners and calculate the measure within the current strip.
We also multiply $\mathcal{W}$ by the width of the strip to get the area covered by the spanning rectangles,
and add this area to the calculated measure.
The total measure is the sum of the measures within all the strips.

The time to sequentially scan the segments and simplify the rectangles within each strip is $O(n)$ (as the segments are already sorted), and the time for applying Bentley's algorithm is $O((s/\lg n) \cdot \lg s)$.
In accordance, the total time to process all the $m$ strips is $O((n^2/s) \cdot \lg n + n \cdot \lg s)$.

\begin{lemma}
Given a read-only array storing the corners of $n$ axis-parallel rectangles in sorted
$x$-coordinate order, and the available workspace is $\Theta(s)$ bits, where $\lg n \leq s \leq n \cdot \lg n$, 
the measure (area of the union) can be computed in $O((n^2/s) \cdot \lg n + n \cdot \lg s)$ time.
\end{lemma}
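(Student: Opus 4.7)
The plan is to sweep horizontally, splitting the plane into $m=\Theta((n/s)\lg n)$ strips each containing $\Theta(s/\lg n)$ rectangle corners so that the corners of any one strip fit in the workspace. Since the input is already in $x$-sorted order, I would use an adjustable navigation pile to stream the corners in $y$-coordinate order, loading the corners of one strip at a time. Inside a given strip, the contribution to Klee's measure splits naturally into two pieces: area coming from rectangles with at least one corner inside the strip (which I can truncate to the strip and keep in the workspace), and area coming from rectangles that \emph{span} the strip, i.e.\ whose vertical edges cross both horizontal separators and whose corners therefore never enter the workspace.

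For the spanning rectangles I would use a single horizontal scan of the strip, exploiting the global $x$-sorted input. While scanning I maintain a nesting counter $z$ that goes up at a left edge of a spanning rectangle and down at a right edge, recording $x_1$ each time $z$ rises from $0$ and adding $x_2-x_1$ to a global accumulator $\mathcal{W}$ when $z$ returns to $0$. At the same time I \emph{simplify} the corners that do lie in the strip: whenever I encounter such a corner inside a current $z>0$ interval I snap its $x$-coordinate to the corresponding $x_1$, and in every case I subtract the running $\mathcal{W}$ from its $x$-coordinate. This collapses each spanning slab to zero width while preserving the horizontal order and mutual distances of the surviving geometry. I can then run Bentley's sweep on the $O(s/\lg n)$ simplified corners to get the contribution of the non-spanning part, and finally add $\mathcal{W}$ times the strip height to account for the spanning slabs. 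Summing over all $m$ strips yields the total measure.

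The delicate step, and what I expect to be the main obstacle, is arguing correctness of the simplification. The key observation should be that inside the strip a spanning rectangle contributes a full-height slab whose area depends only on the union of spanning $x$-intervals (captured by $\mathcal{W}$), so this contribution is already counted once and is independent of the other rectangles. Hence deleting these slabs and sliding the rest leftward does not change the measure of the non-spanning part; snapping a corner trapped inside a collapsed slab to its left endpoint is harmless because the portion of its rectangle inside the slab is already accounted for in $\mathcal{W}\cdot(\text{strip height})$. For the running time, producing the $y$-ordered corners with the navigation pile costs $O(n^2/s + n\lg s)$; per strip, sequentially scanning the $x$-sorted input costs $O(n)$ and Bentley's algorithm costs $O((s/\lg n)\cdot\lg s)$. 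Summed over the $m=\Theta((n/s)\lg n)$ strips this gives $O((n^2/s)\lg n)$ for the scans and $O(n\lg s)$ for Bentley, matching the claimed $O((n^2/s)\lg n + n\lg s)$ bound.
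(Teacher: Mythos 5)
Your proposal matches the paper's proof essentially step for step: the same partition into $m=\Theta((n/s)\lg n)$ horizontal strips of $\Theta(s/\lg n)$ corners produced by an adjustable navigation pile, the same counter-based scan accumulating $\mathcal{W}$ and simplifying (relocating) the corners before running Bentley's algorithm locally, and the same accounting of $\mathcal{W}$ times the strip height, with an identical time analysis. Your added justification of why the simplification preserves the measure is a welcome elaboration of a step the paper delegates to a citation, but it does not change the approach.
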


\section{A Multi-Scanning Technique: Partitioning the Plane}
\label{measure2} 

In this section we introduce a general technique that can be used in different space-efficient algorithms,
and apply it to the measure problem if the input is not sorted.
The idea is to perform alternating vertical and horizontal sweeps on parts of the plane to identify cells, each containing a set of objects that fit in the working storage. Once identified, we apply a local algorithm within each cell.
By partitioning the plane into a grid of cells, we combine the local solutions for the cells together 
to obtain the final outcome. The details come next. 

We partition the plane into $m = \lceil \sqrt{(n/s) \cdot \lg n} \rceil$ horizontal strips, 
where each strip consists of $O(n/m)$ corners. 
We process the horizontal strips one after the other in sorted $y$-coordinate order using an adjustable navigation pile.
Once the two separators of a horizontal strip $\mathcal{H}$ are determined, 
we initialize an adjustable navigation pile $Y_\mathcal{H}$ for the strip that allows us to
stream the corners within $\mathcal{H}$ ordered by their $y$-coordinates.
We start sweeping over the plane in sorted $x$-coordinate order using another 
adjustable navigation pile $X_\mathcal{H}$ that is initialized over the whole input.
For this horizontal sweep, we are interested only in the corners in $\mathcal{H}$ as well
as the vertical segments spanning $\mathcal{H}$---to find the spanning
segments, we have to take all corners of the plane into consideration.
Whenever the number of corners in $\mathcal{H}$ produced by $X_\mathcal{H}$ 
is $\ell = \lceil s/\lg n \rceil$ (except for the last cell that may have less corners), 
we have reached a vertical separator that identifies, as a right boundary, a cell $\mathcal{V}$ within $\mathcal{H}$. 
The corners of a cell can be stored in $O(s)$ bits and hence fit in the working storage.
During this horizontal sweep over $\mathcal{V}$, we calculate the horizontal width $\mathcal{W}_h$ of the area 
covered by the vertically spanning rectangles, and in the meantime simplify 
these corners of $\mathcal{V}$ (relocate the $x$-coordinates), as explained 
in the previous section, while storing them.
We temporarily pause the horizontal sweep, and start a vertical sweep within $\mathcal{H}$
after initializing $Y_\mathcal{H}$ using the value of the horizontal separator between $\mathcal{H}$ and the strip above it. During this vertical sweep, we calculate the vertical width $\mathcal{W}_v$ of the area covered by the horizontally spanning rectangles, and simplify the stored corners of $\mathcal{V}$ (this time, relocate the $y$-coordinates).
Since the corners within $\mathcal{V}$ fit in the working storage, we
compute Klee's measure of the parts of the simplified rectangles within
the cell $\mathcal{V}$ using Bentley's algorithm.  
We add the areas covered by the spanning vertical and the spanning horizontal rectangles to
adjust the measure, and subtract the intersection area $\mathcal{W}_h \times \mathcal{W}_v$ that has been added twice.
We repeatedly proceed with the horizontal 
sweep using $X_\mathcal{H}$ to identify and partially process a cell,
then alternately initialize $Y_\mathcal{H}$ and perform a vertical sweep 
within $\mathcal{H}$ to finish the processing of the cell. 
After all the cells of a horizontal strip are processed, 
we repeat the same actions for the next horizontal strips in sequence.
Since we correctly calculate the measure within every cell, the overall sum of all
the local measures is what we are looking for.

Concerning the running time, we consider the time to produce the segments by the navigation piles. 
Recall that $X_\mathcal{H}$ sweeps over all the $n$ corners, 
whereas $Y_\mathcal{H}$ sweeps only over the $O(n/m)$ corners of $\mathcal{H}$.
The navigation piles $X$ for the horizontal sweeps repeatedly process all the
input for every horizontal strip. Since we have a total of $m$ such sweeps, 
the total time consumed by the $X$ navigation piles is $O((n^2/s + n \cdot \lg s) \cdot m)$.
The navigation piles $Y$ for the vertical sweeps process the $O(n/m)$ 
corners of a horizontal strip in one sweep. 
Therefore, the total time for each of these vertical sweeps is $O((n/s + \lg s) \cdot n/m + n)$.
It is straightforward to verify that $n/s + \lg s = \Omega(m)$ for all considered values of $n$ and $s$
(it is either true that $n/s > m$ or otherwise $\lg s = \Omega(m))$. 
The total number of vertical sweeps done within each horizontal strip is $O((n/m)/\ell)$, which is
$O(m)$ since $m = \lceil \sqrt{(n/s) \cdot \lg n} \rceil$.
It follows that the total time of the vertical sweeps within one horizontal strip is 
$O(n^2/s + n \cdot \lg s)$. Multiplying by the number of horizontal strips $m$,
the total time consumed by the $Y$ navigation piles is $O((n^2/s + n \cdot \lg s) \cdot m)$,
matching the bound for the $X$ piles.
The time needed by the extended local version of Bentley's algorithm within
each cell is $O(\ell \cdot \lg \ell)$, resulting in a total of $O(n \cdot \lg s)$ time
for all the calls to Bentley's algorithm.  The time for the navigation piles is dominating.

\begin{lemma}
Given a read-only array containing the corners of $n$ axis-parallel rectangles,
and the available workspace is $\Theta(s)$ bits, where $\lg n \leq s \leq n \cdot \lg n$, 
the measure can be computed in $O((n^2/s + n \cdot \lg s) \cdot \sqrt{(n/s) \cdot \lg n})$ time.
\end{lemma}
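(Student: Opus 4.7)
The plan is to lift the algorithm of Section~\ref{measure} --- which required the corners to be pre-sorted by $x$-coordinate --- to the unsorted case by generating sorted access on the fly with several adjustable navigation piles, at the cost of an extra factor. I would carve the plane into a grid whose cells individually fit in the workspace, process each cell with the Section~\ref{measure} machinery (simplification plus a local call to Bentley's algorithm), and sum the local contributions. Correctness will follow because Klee's measure is additive over any partition of the plane into disjoint cells, once the part contributed by rectangles spanning a cell is handled separately.

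First I would partition the plane into $m$ horizontal strips using a master navigation pile over the $y$-coordinates of corners, each strip $\mathcal{H}$ containing $O(n/m)$ corners. For the strip currently being processed I would set up two auxiliary piles: a horizontal pile $X_\mathcal{H}$ sweeping over the whole input in $x$-order but reporting only the corners lying in $\mathcal{H}$ together with the vertical segments that span $\mathcal{H}$, and a vertical pile $Y_\mathcal{H}$ that streams the $O(n/m)$ corners of $\mathcal{H}$ in $y$-order. Using $X_\mathcal{H}$ I would cut $\mathcal{H}$ into cells of $\ell=\lceil s/\lg n\rceil$ corners each: whenever $\ell$ corners have been accumulated, the current $x$-position is a vertical separator delimiting a new cell $\mathcal{V}$.

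For each cell $\mathcal{V}$ I would perform four steps. (i) Finish the horizontal sweep over $\mathcal{V}$, accumulating the horizontal width $\mathcal{W}_h$ covered by the vertically spanning rectangles and simplifying the $x$-coordinates of the stored corners exactly as in Section~\ref{measure}. (ii) Pause $X_\mathcal{H}$, reinitialize $Y_\mathcal{H}$ at the top boundary of $\mathcal{H}$, run a vertical sweep through $\mathcal{V}$, compute the vertical width $\mathcal{W}_v$ covered by the horizontally spanning rectangles, and simplify the $y$-coordinates of the stored corners. (iii) Apply Bentley's algorithm to the $O(\ell)$ doubly-simplified rectangles inside $\mathcal{V}$, which now fit in $O(s)$ bits. (iv) Add $\mathcal{W}_h$ times the height of $\mathcal{H}$ and $\mathcal{W}_v$ times the width of $\mathcal{V}$ to the local measure, and subtract the rectangle $\mathcal{W}_h \times \mathcal{W}_v$, which was counted twice.

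For the analysis and the choice of $m$, I would note that each horizontal sweep of $X_\mathcal{H}$ touches all $n$ corners and costs $O(n^2/s + n\lg s)$, giving $O((n^2/s + n\lg s)\cdot m)$ across the $m$ strips. A single vertical sweep of $Y_\mathcal{H}$ over the $O(n/m)$ corners of $\mathcal{H}$ costs $O((n/s+\lg s)\cdot(n/m)+n)$; since we need $\Theta(m)$ such vertical sweeps per horizontal strip (one restart per cell), and since $n/s + \lg s = \Omega(m)$ for the intended value of $m$, the $Y$-sweeps match the $X$-sweeps at $O((n^2/s + n\lg s)\cdot m)$ in total. Bentley's algorithm contributes only $O(n\lg s)$ overall. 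Balancing so that each cell of $\ell = \Theta(s/\lg n)$ corners fits in $O(s)$ bits forces $m = \lceil \sqrt{(n/s)\lg n}\rceil$, yielding the claimed time bound. The main obstacle I expect is showing that the two successive simplification phases --- one collapsing $x$-coordinates, the other collapsing $y$-coordinates --- commute and that the corrections $\mathcal{W}_h \cdot |\mathcal{H}|$, $\mathcal{W}_v \cdot |\mathcal{V}|$, $-\mathcal{W}_h \mathcal{W}_v$ exactly reconstruct the contribution of $\mathcal{V}$ to Klee's measure without any double counting between adjacent cells.
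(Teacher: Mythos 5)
Your proposal is correct and follows essentially the same route as the paper: the same partition into $m=\lceil\sqrt{(n/s)\cdot\lg n}\rceil$ horizontal strips, the same interleaved $X_\mathcal{H}$/$Y_\mathcal{H}$ navigation piles identifying cells of $\ell=\lceil s/\lg n\rceil$ corners, the same two-phase simplification with the $\mathcal{W}_h$, $\mathcal{W}_v$, $-\mathcal{W}_h\mathcal{W}_v$ corrections, and the same accounting of the sweep costs. The ``obstacle'' you flag at the end (commutation of the two simplifications and exactness of the corrections) is treated equally briefly in the paper, so nothing is missing relative to its proof.
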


\section{Concluding Comments}
\label{comments}

We have given space-efficient plane-sweep algorithms for some basic geometric problems.
We believe that the techniques we introduce cover a range of ideas to handle many other plane-sweep algorithms
in a space-efficient manner. We also believe that our techniques can be easily extended to higher dimensions.

Except for $\epsilon$ in the $\Omega(n^{2-\epsilon})$ Yao's lower bound for the element-distinctness problem, the $O(n^2/s + n \cdot \lg s)$ bound for the running time of the problems we solve would be optimal.   
It is an intriguing open problem to get rid of this $\epsilon$.

Another question is if it is possible to get around with the extra logarithmic factors in the running times
of the problem of enumerating the general and the axis-parallel line-segments intersections.
It also remains open if it is possible to solve the measure problem more efficiently when the input is not sorted.

\bibliographystyle{abbrv}
\bibliography{bibliography}

\end{document}